\newcommand{\lv}[1]{}
\def\dist{\operatorname{d}}
\newlength{\dhatheight}
\renewcommand{\le}{\leqslant}
\renewcommand{\leq}{\leqslant}
\renewcommand{\ge}{\geqslant}
\renewcommand{\geq}{\geqslant}
\renewcommand{\epsilon}{\varepsilon}
\def\pruned{\widetilde}
\def\symdiff{\mathop{\vartriangle}}
\newenvironment{tightcenter}
 {\parskip=0pt\par\nopagebreak\centering}
 {\par\noindent\ignorespacesafterend}
\newlength{\RoundedBoxWidth}
\newsavebox{\GrayRoundedBox}
\newenvironment{GrayBox}[1]%
   {\setlength{\RoundedBoxWidth}{\textwidth-4.5ex}
    \def\boxheading{#1}
    \begin{lrbox}{\GrayRoundedBox}
       \begin{minipage}{\RoundedBoxWidth}%
   }{%
       \end{minipage}
    \end{lrbox}%
    \begin{tightcenter}%
    \begin{tikzpicture}%
       \node(Text)[draw=black!20,fill=white,rounded corners,%
             inner sep=2ex,text width=\RoundedBoxWidth]%
             {\usebox{\GrayRoundedBox}};
        \coordinate(x) at (current bounding box.north west);
        \node [draw=white,rectangle,inner sep=3pt,anchor=north west,fill=white] 
        at ($(x)+(6pt,.75em)$) {\boxheading};
    \end{tikzpicture}
    \end{tightcenter}\vspace{0pt}%
    \ignorespacesafterend
}    
\newenvironment{problem}[2][]{\noindent\ignorespaces%
                                \FrameSep=6pt%
                                \parindent=0pt%
                \vspace*{-.5em}
                \ifthenelse{\isempty{#1}}{%
                  \begin{GrayBox}{\textsc{#2}}%
                }{%
                  \begin{GrayBox}{\textsc{#2} parameterized by~{#1}}%
                }
                \newcommand\Prob{Problem:}%
                \newcommand\Input{Input:}%
                \begin{tabular*}{\textwidth}{@{\hspace{.1em}} >{\itshape} p{1.6cm} p{0.8\textwidth} @{}}%
            }{
                \end{tabular*}%
                \end{GrayBox}%
                \vspace*{-.5em}
                \ignorespacesafterend
            }
\newcommand{\Problem}[1]{\textsc{#1}}
\newtheorem{theorem}{Theorem}
\newtheorem{observation}[theorem]{Observation}
\newtheorem{lemma}[theorem]{Lemma}
\newtheorem{corollary}[theorem]{Corollary}
\newtheorem{proposition}[theorem]{Proposition}
\newtheorem{definition}[theorem]{Definition}
\newcommand{\md}{\textrm{md}}
\newcommand{\mdim}{{\sc Metric Dimension}}
\newcommand{\qedhere}{\ifmmode\qed\else\hfill\proofSymbol\fi}
\def\Wahlstrom{Wahlstr{\"o}m\xspace}
\newcommand{\cF}{\ensuremath{\mathcal{F}}}
\def\plog{\log^{\kern-.1pt{\scriptscriptstyle O(1)}}\kern-2pt}
\begin{document}

\title{Alternative parameterizations of \Problem{Metric~Dimension}}

\author[1]{Gregory Gutin}
\author[2]{M. S. Ramanujan}
\author[1]{Felix Reidl}
\author[1]{Magnus \Wahlstrom}
\affil[1]{Royal Holloway, University of London, TW20 0EX, UK}
\affil[2]{University of Warwick, CV4 7AL, UK}
\date{}
\maketitle

\begin{abstract}  
A set of vertices $W$ in a graph $G$ is called \emph{resolving} if
for any two distinct $x,y\in V(G)$, there is $v\in W$ such that
$\dist_G(v,x)\neq\dist_G(v,y)$, where $\dist_G(u,v)$ denotes the length of a
shortest path between $u$ and $v$ in the graph $G$. The metric dimension
$\md(G)$ of $G$ is the minimum cardinality of a resolving set. The \Problem{Metric
Dimension} problem, \ie deciding whether $\md(G)\le k$, is NP-complete
even for interval graphs (Foucaud \etal, 2017). We study \Problem{Metric
Dimension} (for arbitrary graphs) from the lens of parameterized
complexity. The problem parameterized by $k$ was proved to be $\W[2]$-hard by
Hartung and Nichterlein (2013) and we study the dual parameterization, i.e.,
the problem of whether $\md(G)\le n- k,$ where $n$ is the order of $G$. We
prove that the dual parameterization admits (a) a kernel with at most $3k^4$
vertices and (b) an algorithm of runtime $O^*(4^{k+o(k)}).$ Hartung and
Nichterlein (2013) also observed that \Problem{Metric Dimension} is fixed-parameter
tractable when parameterized by the vertex cover number $vc(G)$ of the input graph.
We complement this observation by showing that it does not admit a polynomial kernel
even when parameterized by~$vc(G) + k$.
Our reduction also gives evidence for non-existence of polynomial Turing kernels.  
\end{abstract}

\section{Introduction} 
\label{sec:intro}

A set of vertices $W$ of a graph $G$ is a \emph{resolving} set for $G$ if for
any two distinct $x,y\in V(G)$, there is $v\in W$ such that
$\dist_G(v,x)\neq\dist_G(v,y)$, where $\dist_G(u,v)$ denotes the length of a
shortest path between $u$ and $v$ in the graph $G$. The \emph{metric
dimension} $\md(G)$ of $G$ is the minimum cardinality of a resolving set for
$G$. The metric dimension of graphs was introduced independently by
Slater~\cite{Slater75} and Harary and Melter~\cite{HararyM76}. \Problem{Metric Dimension}
as a computational problem was first mentioned in the literature by
Garey and Johnson~\cite{GareyJ79} and its decision version is defined as
follows.
 
\begin{problem}{Metric Dimension}
  \Input & A graph~$G$ and an integer~$k$. \\
  \Prob  & Does~$G$ have a resolving set of size at most~$k$? 
\end{problem}

\noindent Garey and Johnson~\cite{GareyJ79} proved this problem to be 
NP-complete in general. Their proof was never published, a reduction
from 3SAT was provided by Khuller \etal~\cite{KhullerRR96}.
Diaz \etal~\cite{DiazPSL12} showed that the problem is NP-complete even when
restricted to planar graphs of bounded degree but that it is
solvable in polynomial time on the class of outer-planar graphs.

Prior to this, not much was known about the computational complexity of this
problem except that it is polynomial-time solvable on trees
(see~\cite{Slater75,KhullerRR96}), although there are several results proving
combinatorial bounds on the metric dimension of various graph
classes~\cite{ChartrandEJO00}.  Subsequently, Epstein et
al.~\cite{EpsteinLW12} showed that this problem is NP-complete on split
graphs, bipartite and co-bipartite graphs. They also showed that the
\emph{weighted version} of {\mdim} can be solved in polynomial time on paths,
trees, cycles, co-graphs and trees augmented with $k$ edges for a fixed $k$.
Hoffmann and Wanke~\cite{HoffmannW12} extended the tractability results to a
subclass of unit disk graphs, while Foucaud \etal~\cite{FoucaudMNPV17} showed
that this problem is NP-complete on interval graphs.

The parameterized complexity of {\mdim} under the standard  
parameterization---the metric dimension of the input graph---was open until 2012, when
Hartung and Nichterlein~\cite{HartungN13} proved that it is $\W[2]$-hard.
Foucaud \etal~\cite{FoucaudMNPV17} showed the problem becomes fixed-parameter
tractable when restricted to interval graphs.   The parameterized complexity
of \Problem{Metric Dimension} on graphs of bounded treewidth is currently
unresolved (the question of whether it is polynomial-time solvable on graphs
of treewidth 2 is still open), however, Belmonte \etal~\cite{BelmonteFGR17}
proved that it is $\FPT$ when parameterized by the
tree\emph{length}\footnote{The \emph{length} of a tree decomposition is the
maximum diameter of the bags in this tree-decomposition and the
\emph{treelength} of a graph is the minimum length over all tree
decompositions.  Note that this parameter is upper-bounded by treewidth.} plus
the solution size. In a different line of work,
Eppstein~\cite{Eppstein15} showed that \Problem{Metric Dimension} is $\FPT$ when
parameterized by the max-leaf number of the input graph alone. 

In this paper we initiate the study of the parametric dual of
\Problem{Metric Dimension}. To avoid confusion, we will use $k$ to denote the
(standard) parameter and phrase the parameterized dual as follows:

\begin{problem}{Saving Landmarks}
  \Input & A graph~$G$ and an integer~$k$. \\
  \Prob  & Does~$G$ have a resolving set of size at most~$n-k$? 
\end{problem}

\noindent
We call a set $T$ of vertices of $G$ a {\em co-resolving set} if
$V(G)\setminus T$ is a resolving set of $G$. Clearly, an instance of
\Problem{Saving Landmarks} is positive if and only if there is a co-resolving
set $T$ of size at least $k$.

This choice of parameterization is informed by previous studies of the parametric
dual (see e.g. \cite{BasavarajuFRS16,CrowstonGJSY12,GutinJY11,RazgonO09}):
problems that are hard with respect to the standard parameter often admit an
$\FPT$-algorithms or even polynomial kernels under the dual parameter.
A classic example is the
\Problem{Independent Set} problem which is $\W[1]$-hard while its dual, the
\Problem{Vertex Cover} problem is among the earliest problems shown to be in
$\FPT$ and even admits a linear vertex kernel.  

We add yet another entry to the list of hard problems with tractable duals by
showing that \Problem{Saving Landmarks} admits a polynomial kernel and a
single-exponential $\FPT$ algorithm. Concretely, we prove the following two
results.

\begin{restatable}{theorem}{savingkernel}\label{thm:saving-kernel}
  \Problem{Saving Landmarks} admits a kernel with at most~$8k^4$ vertices.
\end{restatable}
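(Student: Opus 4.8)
The plan is to work with the dual object directly. For distinct vertices $x,y$ write $R(x,y)=\{v\in V(G):\dist_G(v,x)\neq\dist_G(v,y)\}$ for the set of vertices distinguishing them. A set $W$ is resolving exactly when it meets every $R(x,y)$, so a set $T$ is co-resolving (i.e.\ $V(G)\setminus T$ is resolving) precisely when $T$ contains no entire set $R(x,y)$. Since $x,y\in R(x,y)$ always, every $R(x,y)$ has size at least $2$, and $|R(x,y)|=2$ holds iff $x,y$ are \emph{twins} (agree in distance to every other vertex); the twin relation is an equivalence whose classes are cliques or independent sets. \Problem{Saving Landmarks} thus asks for a set $T$ of size at least $k$ containing no $R(x,y)$.

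First I would install a twin reduction rule: while some twin class has at least three vertices, delete one of them. Deleting a vertex from a class of size $\ge 3$ leaves at least two twins behind, so all pairwise distances among the survivors, and the whole twin partition, are preserved. I would then verify that this changes neither $n-\md(G)$ nor the answer: a co-resolving set not using the deleted vertex survives verbatim, one using it can be rerouted to a surviving twin, and conversely any co-resolving set of the smaller graph lifts by adding the deleted vertex as a landmark. After exhaustive application every twin class has size at most $2$, so no three vertices share all their distances to the rest of the graph.

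The core is then a win/win lemma: in a twin-reduced graph, if $n>8k^4$ then a co-resolving set of size $k$ exists, so we may simply answer \textsc{yes}. I would build such a $T$ greedily, repeatedly adding a vertex while keeping $T$ co-resolving. Adding $v$ destroys co-resolvingness only if it completes some $R(x,y)\subseteq T\cup\{v\}$; as $x,y\in R(x,y)$ this forces $x,y\in T\cup\{v\}$. Hence a candidate $v$ is \emph{blocked} only for one of two reasons: (\textbf{A}) some pair $x,y\in T$ already has $R(x,y)\setminus T=\{v\}$, of which there are at most $\binom{|T|}{2}<k^2$; or (\textbf{B}) there is $y\in T$ with $R(v,y)\setminus\{v\}\subseteq T$, meaning every vertex outside $T\cup\{v\}$ is equidistant from $v$ and $y$ --- that is, $v$ is a twin of $y$ \emph{relative to} the small set $T$.

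The main obstacle is to bound the vertices blocked for reason (\textbf{B}). Fixing $y\in T$, all such $v$ agree with $y$, hence with one another, on the distance to every vertex outside $T$; they can differ only in their distance vectors to the at most $k-1$ vertices of $T$ (and in mutual distances). Here the twin reduction does the decisive work: if two such vertices had \emph{identical} distance vectors to $T$ as well, they would agree in distance to every other vertex and thus be genuine twins, so at most two such $v$ share a distance-to-$T$ profile. It then remains to bound the number of realizable profiles polynomially in $k$, using that distances to the $k-1$ elements of $T$ satisfy the triangle inequality and change by at most one across adjacent BFS layers; this caps the number of type-(\textbf{B}) vertices per $y$ by a polynomial in $k$, and hence the total number of blocked vertices by $O(k^4)$. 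Consequently, whenever $n>8k^4$ some vertex is unblocked at each of the $k$ greedy steps, yielding the desired co-resolving set. Finally I would assemble the kernel: apply twin reduction in polynomial time, run the greedy procedure, output a trivial \textsc{yes}-instance if it reaches size $k$, and otherwise return the reduced graph, which now has at most $8k^4$ vertices. I expect the profile-counting in (\textbf{B}) --- showing that metric constraints together with twin reduction forbid unboundedly many almost-twins of a single vertex modulo $T$ --- to be the delicate step.
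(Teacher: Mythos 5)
The step you flagged as \enquote{delicate} is not merely delicate---it is false, and the greedy framework built on it cannot be repaired. Here is a concrete counterexample. Let $D=\{d_1,\dots,d_r\}$ with $r=k-3$, let $A$ contain one vertex $a_S$ for every nonempty $S\subseteq D$, with $N(a_S)=S\cup\{c\}$, where $c$ is a vertex adjacent to all of $A\cup D$. All distances are at most $2$, the graph is pruned (all neighbourhoods are distinct), it has $n=2^{k-3}+k-3\gg 8k^4$ vertices, and one checks that $R(a_S,a_{S'})=(S\symdiff S')\cup\{a_S,a_{S'}\}$. Now take $T=D\cup\{c,a_{\{d_1\}}\}$. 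This is a co-resolving set of size $k-1$ (its complement $A\setminus\{a_{\{d_1\}}\}$ resolves everything), and it is \emph{maximal}: every remaining vertex $a_S$ is blocked for your reason (\textbf{B}) with witness $y=a_{\{d_1\}}$, because $R(a_S,a_{\{d_1\}})\setminus\{a_S\}\subseteq D\cup\{a_{\{d_1\}}\}\subseteq T$. So the number of type-(\textbf{B}) vertices for a single witness can be $2^{\Omega(k)}$, and their distance-to-$T$ profiles are pairwise distinct (the coordinate at $d_l$ is $1$ or $2$ according to whether $d_l\in S$): the triangle inequality only confines each coordinate to two values, which is no constraint at all, so no profile-counting argument can give a polynomial bound. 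Worse, the example shows your greedy procedure can get stuck at size $k-1$ in a graph of exponential size, even though the instance \emph{is} a yes-instance (any $k$ vertices of $A$ form a co-resolving set, since each pair $a_S,a_{S'}$ is resolved by any vertex of $S\symdiff S'\subseteq D$). The win/win lemma you want is true, but maximal co-resolving sets can be far smaller than maximum ones, so no argument that only extends the current $T$ can prove it.

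The missing idea, which is how the paper proceeds, is to abandon the current partial solution and build one from scratch when blocking occurs. The paper sets $\tau(u,v)=|N(u)\symdiff N(v)|$ and forms the auxiliary graph $H$ with $uv\in E(H)$ iff $\tau(u,v)\le k$; note your type-(\textbf{B}) vertices for witness $y$ satisfy $\tau(v,y)\le|T|+1\le k$, so they all lie in $N_H[y]$. The win/win is then: if some vertex has $H$-degree at least $8k^3$, a neighbourhood-intersection argument inside that $H$-neighbourhood produces a clique or an independent set of size $2k$ in $G$, and---this is the key lemma---taking one vertex from each twin class of that clique/independent set is a co-resolving set of size at least $k$, because in a pruned graph these at least $k$ twin classes are distinguished by vertices \emph{outside} the clique/independent set. (In the example above this is exactly the certificate: $A$ is a huge independent set.) Otherwise $H$ has maximum degree below $8k^3$, and if $n\ge 8k^4$ a greedy independent set of size $k$ in $H$ is itself co-resolving, since pairs with $\tau\ge k+1$ are resolved by any $n-k$ landmarks. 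Your setup (the sets $R(x,y)$, the twin reduction, the type-(\textbf{A}) bound) is sound, but without this global clique/independent-set step the argument cannot be completed.
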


\begin{restatable}{theorem}{savingalgo}\label{thm:saving-algo}
  \Problem{Saving Landmarks} can be solved in time~$O^*(4^{k+o(k)})$.
\end{restatable}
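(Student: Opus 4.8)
The plan is to first invoke Theorem~\ref{thm:saving-kernel} to reduce to an equivalent instance $(G,k)$ with $|V(G)| = O(k^4)$ and to precompute the all-pairs distance matrix of $G$ in polynomial time. From now on $n = O(k^4)$, so the whole point is to beat brute force and obtain a running time that is single-exponential in $k$ rather than in $n$. The key reformulation I would use is the following. For distinct vertices $x,y$ let $R(\{x,y\}) = \{v : \dist_G(v,x) \ne \dist_G(v,y)\}$ be its set of \emph{resolvers}; since $x,y \in R(\{x,y\})$, every such set has size at least two. A set $W$ is resolving iff it meets every $R(\{x,y\})$, so $T$ is co-resolving iff $T$ contains no $R(\{x,y\})$ entirely. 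Being co-resolving is closed under taking subsets, so it suffices to look for a co-resolving $T$ of size \emph{exactly} $k$; consequently only the resolver sets of size at most $k$ can ever be violated, and the task becomes: find a $k$-subset $T$ fully containing none of the hyperedges of $\mathcal H = \{R(\{x,y\}) : |R(\{x,y\})| \le k\}$. Equivalently, $T$ is co-resolving iff every pair lying inside $T$ has a resolver outside $T$.

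Next I would exploit the twin structure. The size-two hyperedges of $\mathcal H$ are exactly the twin pairs (pairs $\{x,y\}$ with $\dist_G(\cdot,x) = \dist_G(\cdot,y)$ off $\{x,y\}$), so they partition the vertices into twin classes and force $T$ to take at most one vertex from each class, a partition-matroid constraint; in particular a necessary condition for a Yes-instance is that there be at least $k$ twin classes. Crucially, two vertices of the same twin class are interchangeable for \emph{every} distance comparison, so a chosen representative has the same effect whether it acts as an internal element of $T$ or as an external resolver. This lets me collapse each twin class to a single representative and decouple \emph{which} vertex represents a class from the combinatorial choice of \emph{which} classes enter $T$, shrinking the effective ground set from the $O(k^4)$ kernel vertices towards one element per class.

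The remaining, higher-arity hyperedges (resolver sets of size between $3$ and $k$) are the real obstruction, and handling them is the technical heart of the proof. The plan is to combine the interchangeability reduction with an enumeration over a candidate set of only $O(k)$ representatives, argued via a win/win dichotomy in the spirit of independent-transversal (Haxell-type) bounds: if the twin classes are numerous and large relative to how the small resolver sets overlap them, a co-resolving $T$ can be assembled greedily class by class, dodging each higher-arity constraint by a fresh choice of representative; otherwise the number of representatives one must actually decide on is $O(k)$, over which one enumerates the $\binom{O(k)}{k}$ candidate $k$-subsets and tests each against the precomputed distance matrix in polynomial time. Tuned so that the candidate set has size $(2+o(1))k$, the enumeration $\binom{(2+o(1))k}{k}$ is $4^{k+o(k)}$, matching the target; this is also the regime in which \Problem{Saving Landmarks} specializes the dual (\enquote{saving}) parameterization of covering problems, so the engine can alternatively be phrased as a reduction of that flavour.

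I expect the main obstacle to be precisely the control of the higher-arity resolver sets, namely ensuring that the greedy/enumeration step never needs to commit to more than $(2+o(1))k$ representatives, so that the base of the exponential stays at $4$ rather than degrading to $k^{O(k)}$. The twin-interchangeability observation is what makes this plausible, but turning it into a clean bound on the number of undetermined representatives, uniformly over all instances surviving the kernel, is the delicate part, and is exactly where the $o(k)$ slack in the exponent is spent.
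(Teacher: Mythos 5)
Your proposal has a genuine gap at exactly the point you flag as ``the technical heart'': the claimed win/win dichotomy. You assert that either a co-resolving set can be assembled greedily class by class (dodging the higher-arity resolver sets), or else the instance collapses to a candidate set of $(2+o(1))k$ representatives over which one can enumerate. Neither horn is proved, and neither is a known result you can cite: there is no Haxell-type theorem that applies to this hypergraph of resolver sets, and nothing in the kernel's structure identifies a polynomial-time-computable set of $O(k)$ vertices guaranteed to contain a solution. Indeed it is unclear why such a candidate set should exist at all --- a no-instance surviving the kernel can still have $\Theta(k^4)$ vertices and $\Theta(k^4)$ twin classes, with the obstructions to co-resolvability distributed across all of them, so ``the number of representatives one must actually decide on'' has no a priori bound of $O(k)$. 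Since the entire $4^{k+o(k)}$ bound rests on this unproved enumeration bound, the proposal is a plan rather than a proof; as you yourself concede, the delicate part is ``exactly where the $o(k)$ slack in the exponent is spent,'' but that slack cannot absorb a step that is missing rather than merely lossy.

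What your outline also lacks is the structural lemma that makes any $4^k$-type bound possible, and which the paper's proof is built on: if $T$ is co-resolving, then some $S \subseteq V(G)\setminus T$ with $|S| \le |T|$ already resolves $T$ (Lemma~\ref{lemma:resolve-witness}, proved by noting that each added resolver increases the number of equivalence classes of $S$-equidistant vertices in $T$). This gives a witness structure of total size at most $2k$, which is the correct formalization of your intuition that only about $2k$ vertices ``matter.'' The paper then sidesteps the problem of \emph{locating} these $2k$ vertices --- the very problem your enumeration step cannot solve --- by color coding: a uniformly random bipartition $V(G) = R \cup B$ places $T$ inside $R$ and $S$ inside $B$ with probability at least $4^{-k}$, and one can check in polynomial time whether $R$ contains at least $k$ equivalence classes of $B$-equidistant vertices (which is shown equivalent to the existence of a co-resolving set of size at least $k$). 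Repeating $4^k$ times and derandomizing with $(n,k)$-universal sets yields $O^*(4^{k+o(k)})$, with no kernelization needed. If you want to salvage your approach, the resolve-witness lemma plus random separation (or universal sets directly, in place of your deterministic candidate set) is the missing engine.
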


\noindent  
We also study the \Problem{Metric Dimension} problem from the
kernelization perspective when parameterized by the vertex cover number of the
input graph.  As Hartung and Nichterlein observed~\cite{HartungN13},
parameterization of \Problem{Metric Dimension}  by the vertex cover number of
the input graph  (denoted \Problem{Metric Dimension[VC]}) can be easily seen
to be in $\FPT$. It is therefore natural to ask whether this structural
parameterization allows a polynomial kernel in general graphs, a question we
answer in the negative. In fact, we show that not only does the problem not
admit a polynomial kernel with the vertex cover as the parameter, even adding
the size of the solution (the metric dimension of the graph) to the parameter
is unlikely to be helpful in this regard. Specifically, we prove the following
result.

\begin{restatable}{theorem}{kernelhardness}\label{nopolykernel}
  \Problem{Metric Dimension[$\text{VC}+k$]} does not admit
  a polynomial kernel unless the polynomial hierarchy collapses
  to its third level. 
\end{restatable}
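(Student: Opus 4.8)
The plan is to prove that \Problem{Metric Dimension[$\text{VC}+k$]} does not admit a polynomial kernel by exhibiting a \emph{cross-composition} (equivalently, an OR-composition via a polynomial-parameter transformation from a suitable NP-hard source problem). First I would fix an NP-hard problem whose instances have a natural ``small selection'' flavour that {\mdim} can encode---a good candidate is a variant of \Problem{Set Cover} or \Problem{Hitting Set}, or a coloured/annotated version thereof, since resolving sets are naturally about separating pairs of vertices, and set systems encode separation constraints cleanly. The goal of the cross-composition is to take $t$ instances $I_1,\dots,I_t$ of the source problem (each of size $s$) and build a single graph $G$ together with a budget $k$ such that (i) $G$ is a yes-instance of {\mdim} iff at least one $I_j$ is a yes-instance, and (ii) both $\mathrm{vc}(G)$ and $k$ are bounded by a polynomial in $s\cdot\log t$ (in particular, polynomial in the size $s$ of a single instance, up to a $\log t$ factor). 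Once such a cross-composition is in place, the standard framework of Bodlaender, Downey, Fellows and Hermelin immediately rules out a polynomial kernel under $\mathrm{NP}\not\subseteq\mathrm{coNP/poly}$, i.e.\ unless the polynomial hierarchy collapses to its third level.

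The construction I would aim for uses a small vertex cover as the ``core'' of the graph into which all $t$ instances are selected, together with a gadget that forces a resolving set to (a) include almost all of a large independent set attached to the core, and (b) make a single global choice that corresponds to picking one instance $I_j$ and a solution within it. Concretely, I would build a bounded-size ``selector'' gadget $S$ of size $\mathrm{poly}(s\log t)$ that forms the vertex cover, attach pendant-like or twin-like vertices that must be distinguished by $S$, and design the distance pattern so that the only cheap way to resolve all the forced pairs is to commit to one instance's structure. The key design principle is exploiting \emph{false twins} / \emph{modules}: two vertices with the same neighbourhood can only be resolved by placing one of them into the resolving set, so one can \emph{force} large chunks of the solution and thereby route the ``real'' decision through a tiny controllable part of the graph that lives inside the vertex cover. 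Keeping the vertex cover small is exactly what makes this a VC-parameterized lower bound rather than a generic one, and it is also what makes $k$ (the metric dimension bound) controllable.

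The main obstacle, and the step I expect to be most delicate, is the simultaneous control of both parameters: the reduction must keep $\mathrm{vc}(G)+k$ polynomial in the single-instance size while still OR-ing together $t=2^{\mathrm{poly}(s)}$ instances. Bounding the vertex cover forces every instance's information to be encoded on a shared, sublinear-in-$t$ skeleton, which is in tension with needing $t$ independent ``channels'' so that a yes-certificate in any one instance survives. The natural resolution is a \emph{logarithmic addressing} trick: use $O(\log t)$ selector vertices (inside the vertex cover) to encode the binary index of the active instance, and verify via the metric that the chosen landmarks/distances are consistent with exactly one index. Proving correctness then splits into the usual two directions---a yes-instance $I_j$ yields a co-resolving/resolving set respecting the budget by decoding index $j$, and conversely any resolving set of the required size must, by the forced-twin arguments, implicitly select a single consistent index $j$ and hence solve $I_j$---with the reverse direction requiring the careful distance bookkeeping. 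Finally, to obtain the claimed evidence against polynomial \emph{Turing} kernels, I would arrange the composition so that the source problem is \Problem{WK[1]}-hard (or has the appropriate ``OR-distillation'' robustness), observing that the same construction composing from such a source transfers the Turing-kernel lower bound essentially verbatim.
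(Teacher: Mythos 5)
Your high-level machinery is the right one (a polynomial-parameter reduction or cross-composition rules out polynomial kernels under $\NP \not\subseteq \coNP/\text{poly}$, hence unless PH collapses), and several of your ingredients do appear in the actual proof: \Problem{Hitting Set} as the source problem, forcing landmarks via twins, and logarithmic-size ``addressing'' sets of vertices that give every vertex a unique neighbourhood. But the paper does \emph{not} build a cross-composition over $t$ instances at all. It gives a single polynomial parameter transformation from \Problem{Hitting Set[$|U|+\ell$]}, which is already known (Dom, Lokshtanov, Saurabh) to admit no polynomial kernel unless PH collapses. The ``OR over exponentially many candidates'' that you are trying to engineer by hand is already absorbed into that source problem: the family $\mathcal F$ may have size exponential in $|U|$, and the reduction only needs to keep $\mathrm{vc}(G)+k = O(|U| + \ell)$, which it does by encoding $\mathcal F$ as an independent set (outside the vertex cover) and using $O(\log|\mathcal F|)$ identification vertices. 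This completely sidesteps the instance-selector design that you correctly identify as the delicate step.

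That delicate step is also where your plan has a genuine gap, not just missing detail. Your logarithmic addressing proposes to encode the index of the active instance by \emph{which} of the $O(\log t)$ selector vertices are chosen as landmarks. But membership-based encoding fights against the very twin-forcing you rely on elsewhere: twins force vertices \emph{into} every resolving set, so they cannot create a free binary choice, and selector vertices left optional create cheats. Concretely, nothing in your sketch prevents a resolving set from taking \emph{more} selectors than any legal index pattern (or an inconsistent pattern); such a set would resolve the forced pairs of \emph{every} instance simultaneously and thereby certify ``yes'' without any source instance being solvable, killing the reverse direction. Making the budget tight enough to forbid this, while keeping the forward direction (a yes-instance really does yield a resolving set within budget) and keeping $\mathrm{vc}(G)+k$ polynomial in $s + \log t$, is exactly the unsolved core of your construction. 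The paper never has to face it: in its reduction the only optional landmarks are the $\ell$ ``hitting'' vertices, and the pairs $R, R'$ (a set and its copy with its $U$-edges deleted) are resolvable \emph{only} by a $U$-vertex hitting $R$ or by picking $R'$ itself, which converts directly into a hitting set. Your closing remark on Turing kernels is in the right spirit, but again it is the transfer of MK[2]-hardness (of \Problem{Hitting Set[$n$]}) along the same transformation, not a property one needs to engineer into a composition.
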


\noindent 
The reduction used in the proof of Theorem \ref{nopolykernel} also
gives evidence for non-existence of polynomial Turing kernels,
generalizations of (ordinary) kernels, informally introduced in the end of
Section \ref{sec:nopolykernel}.

\section{Preliminaries}\label{sec:prelims}

For a graph~$G$ we denote by~$\dist_G$ the standard distance-metric
where~$\dist_G(u,v)$ is the length of a shortest path between vertices $u,v
\in V(G)$. We denote by~$N_G(v)$ and~$N_G[v]$ the open and closed
neighbourhood of a vertex. We omit the subscript~$G$ if clear from the context
in all these notations. As customary, the number of vertices of a graph $G$
under consideration will be denoted by $n.$

Two vertices~$u,v$ are \emph{true
twins} if~$N_G[u] = N_G[v]$ (implying that~$uv \in G$) and they
are \emph{false twins} if~$N_G(u) = N_G(v)$. A \emph{twin class}
is a maximal vertex set in~$G$ in which all vertices are pairwise
true twins or in which all vertices are pairwise false twins. 

A vertex set~$S \subseteq V(G)$ \emph{resolves} a set~$T \subseteq V(G)$ if 
for every pair of distinct vertices~$u,v \in T$ there exists at least
one vertex~$w \in S$ such that~$\dist_G(u,w) \neq \dist_G(v,w)$.
We will also say that a pair~$u,v$ is {\em resolved} by~$S$ if the above holds
and further that sets~$A,B$ are \emph{distinguished} by~$S$ if every pair
$u \in A$, $v \in B$ is resolved by~$S$.
A vertex subset~$S \subseteq V(G)$ is a \emph{resolving set} of~$G$
if~$S$ resolves~$V(G)$. We call the members of such a set~$S$ \emph{landmarks}. 

Parameterized complexity is a two dimensional framework for
studying the computational complexity of a problem. One dimension is the input
size $n$ and the other is a parameter $k$. A problem is said to be \emph{fixed
parameter tractable} ($\FPT$) or in the class $\FPT$, if it can be
solved in time $f(k)\cdot n^{O(1)}$ for some computable function $f$. We refer
to the books of  Cygan \etal~\cite{CyganFKLMPPS15} and Downey and
Fellows~\cite{DowneyF13} for  detailed introductions  to parameterized
complexity.

Kernelization offers a mathematically rigorous way  of analysing and
comparing preprocessing algorithms for $\NP$-hard problems in general and for
parameterized problems in particular.  A {\em kernel of size $g(k)$} for a
parameterized problem  is a polynomial time algorithm that takes as input an
instance $(I,k)$ of the problem (where $k$ is the parameter) and outputs
another instance $(I',k')$ of the same problem such that $(I,k)$ is a yes-
instance of the problem if and only if $(I',k')$ is a yes-instance of the
problem and $|I'|+k' \leq g(k)$. The notion of ``effective'' preprocessing is
captured by requiring the function $g$ to be polynomially bounded, in which
case the kernel is called a {\em polynomial kernel}. The reader is referred to
Cygan \etal~\cite{CyganFKLMPPS15}, Downey and Fellows~\cite{DowneyF13}, Fomin
\etal \cite{FominLSZ} and the surveys~\cite{Kratsch14,LokshtanovMS12}  for  a
comprehensive introduction to the topic of kernelization.

\begin{definition}[Pruned graph]
  For a graph~$G$ we define the \emph{pruned graph}~$\pruned G$ as
  the graph obtained (up to isomorphism) from~$G$ by iteratively removing
  vertices from twin-classes of size three or larger. We say that
  a graph is \emph{pruned} if~$G = \pruned G$.
\end{definition}

\noindent 
The following observation simply follows from the fact that
among a twin class~$U$ in~$G$, all but one vertex of~$U$
must be contained in any resolving set.

\begin{observation}\label{obs:pruning}
  A graph~$G$ has a resolving set of size~$k$ if and only if
  the pruned graph $\tilde G$ has a resolving set of size~$k - (|V(G)| - |V(\tilde G)|)$.
\end{observation}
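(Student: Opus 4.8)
The plan is to reduce the iterated construction of~$\tilde G$ to a single pruning step and argue by induction on the number~$p := |V(G)| - |V(\tilde G)|$ of deleted vertices. Concretely, I would prove the single-step statement: if~$u$ lies in a twin class~$U$ with~$|U| \ge 3$ and~$G' := G - u$, then~$\md(G) = \md(G') + 1$. Since~$\tilde G$ is obtained by~$p$ such deletions (each from a class of current size at least three, the whole process being well defined up to isomorphism by definition), summing the per-step decrease yields~$\md(\tilde G) = \md(G) - p$. Finally, since any superset of a resolving set is again resolving, ``having a resolving set of size~$s$'' is equivalent to~$\md \le s$, so the claimed biconditional is exactly~$\md(G) \le k \iff \md(\tilde G) \le k - p$, which follows.

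Two auxiliary facts would do the heavy lifting. First, a distance-preservation lemma: if~$u$ has a twin~$v \neq u$ that survives in~$G' = G-u$, then~$\dist_{G'}(a,b) = \dist_G(a,b)$ for all~$a,b \in V(G')$. The inequality~$\ge$ is immediate; for~$\le$ I would take a shortest~$a$--$b$ path in~$G$ through~$u$, write it as~$\dots x\,u\,y\dots$, and reroute through~$v$ (which is adjacent to both~$x$ and~$y$, up to the degenerate case~$v \in \{x,y\}$ for true twins, where one simply shortcuts across~$u$), producing a walk of no greater length in~$G'$. Because~$|U| \ge 3$, such a surviving~$v$ always exists, so distances are preserved at every step. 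Second, swapping two twins is an automorphism of~$G$ and hence maps resolving sets to resolving sets; together with the fact that no vertex outside~$\{u,v\}$ resolves a twin pair~$u,v$, this gives~$|S \cap U| \ge |U| - 1$ for every resolving set~$S$.

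For~$\md(G) \le \md(G') + 1$ I would take a minimum resolving set~$S'$ of~$G'$ and show~$S' \cup \{u\}$ resolves~$G$: any pair avoiding~$u$ is resolved by~$S'$ via distance preservation, while any pair containing~$u$ is resolved by~$u$ itself (distance~$0$ versus positive). For the reverse inequality~$\md(G') \le \md(G) - 1$, start from a minimum resolving set~$S$ of~$G$. Using the twin-swap automorphism I may assume~$u \in S$, and since~$|S \cap U| \ge |U| - 1 \ge 2$ there is still a co-twin~$v \in S \setminus \{u\}$. I then claim~$S \setminus \{u\}$ resolves~$G'$: for a pair~$a,b$ resolved in~$G$ by some~$w \neq u$, distance preservation transfers the resolution to~$G'$; for a pair resolved only by~$u$, the co-twin~$v$ has the same distances to~$a,b$ as~$u$ does, so~$v \in S \setminus \{u\}$ resolves it, except in the degenerate subcase where~$a$ or~$b$ equals~$v$, where~$v$ trivially resolves the pair as a landmark.

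I expect the main obstacle to be precisely this last case analysis in the reverse direction --- the situation where~$u$ is the unique landmark resolving some pair. This is exactly the point that forces the threshold~$|U| \ge 3$ rather than~$|U| \ge 2$: one needs a second co-twin to survive in the set after deleting~$u$, and an example such as~$P_3$ (whose two degree-one endpoints form a twin class of size two, with~$\md(P_3)=\md(P_2)=1$) shows the statement genuinely fails for size-two classes. Getting the distance-preservation rerouting fully correct in the true-twin degenerate cases is the other routine-but-fiddly point to nail down.
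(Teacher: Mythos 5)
Your proof is correct and takes essentially the same route as the paper, which justifies the observation by exactly the fact at the heart of your argument: every resolving set must contain all but one vertex of each twin class (your $|S \cap U| \ge |U|-1$). The paper leaves the supporting details implicit (distance preservation when a twin is deleted, the twin-swap automorphism, and the step-by-step induction over the pruning sequence); your write-up simply makes them explicit, including the correct identification of why classes of size three are needed rather than two.
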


\noindent 
Consequently, we call an instance~$(G,k)$ of \Problem{Metric
Dimension} or \Problem{Saving Landmarks} \emph{reduced} if $G$ is pruned.

%
%
%
\section{Standard parameterization for \Problem{Saving Landmarks}}

We present two positive results in this section, namely, that
\Problem{Saving Landmarks} admits a polynomial kernel and a 
single-exponential $\FPT$ algorithm.

%

%
%

We begin by describing the kernel. Assume in the following that the input
instance~$(G,k)$ is pruned as per Observation~\ref{obs:pruning}. This will be
the only reduction rule. In the following we will prove that the size of the
instance is either bounded polynomially in~$k$ or it will be a trivial yes-instance. Let us collect some basic observations first.

\begin{lemma}\label{lemma:clique-indset}
  If~$G$ contains either an independent set or a clique with~$2k$ vertices,
  then~$(G,k)$ is a yes-instance.
\end{lemma}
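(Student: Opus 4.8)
The plan is to work with the dual formulation recalled in the introduction: $(G,k)$ is a yes-instance of \Problem{Saving Landmarks} precisely when $G$ has a co-resolving set $T$ with $|T|\ge k$, i.e.\ a set $T$ of size at least $k$ such that $V(G)\setminus T$ is resolving. I would first isolate a trivial but decisive observation: every vertex resolves any pair it belongs to, since $\dist_G(x,x)=0$ while $\dist_G(x,y)\ge 1$ for $y\neq x$. Consequently, for any candidate set $T$, a pair $\{x,y\}$ can fail to be resolved by $V(G)\setminus T$ only if \emph{both} $x,y\in T$. This reduces the whole task to the following: exhibit a set $T$ with $|T|\ge k$ such that every pair of vertices inside $T$ is resolved by some landmark outside $T$.

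Next I would treat the independent-set case. Let $I$ be an independent set with $|I|=2k$, and aim to choose $T\subseteq I$. The key claim is that any two vertices $u,v\in I$ that are \emph{not} false twins are resolved by a vertex outside $I$: since $N_G(u)\neq N_G(v)$, pick $w$ in their symmetric difference, say $w\in N_G(u)\setminus N_G(v)$; then $\dist_G(w,u)=1$ while $\dist_G(w,v)\ge 2$. Here one checks $w\neq v$ (which holds because $v\notin N_G(u)$, as $I$ is independent) and moreover $w\notin I$ (because $w$ is adjacent to $u\in I$, so independence forbids $w\in I$). Thus the only ``dangerous'' pairs inside $I$ are false-twin pairs. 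Because the instance is pruned, every twin class has at most two vertices, so the false-twin equivalence partitions $I$ into at least $k$ classes; taking one representative per class yields a set $T\subseteq I$ with $|T|=k$ containing no false-twin pair. Every pair in $T$ is then resolved from outside $T$, so $V(G)\setminus T$ is resolving and $(G,k)$ is a yes-instance.

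The clique case is symmetric, working with closed neighbourhoods instead. For a clique $K$ with $|K|=2k$ and two vertices $u,v\in K$ that are not true twins, we have $N_G[u]\neq N_G[v]$; a vertex $w$ in the difference satisfies $\dist_G(w,u)\le 1<2\le\dist_G(w,v)$, and such $w$ must lie outside $K$ since every vertex of $K$ other than $v$ is adjacent to $v$ and hence in $N_G[v]$. Inside a clique the only possible twins are true twins, so the pruned bound of at most two vertices per twin class again lets me select $T\subseteq K$ of size $k$ with no true-twin pair, making $V(G)\setminus T$ resolving.

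I expect the main obstacle to be bookkeeping rather than conceptual depth: the two places that need genuine care are (i) verifying that the resolving vertex $w$ really lies outside the chosen set $T$, which is exactly where independence (resp.\ cliqueness) is used to rule out $w\in I$ (resp.\ $w\in K$), and (ii) invoking the pruned hypothesis correctly so that twin classes have size at most two and therefore $2k$ vertices split into at least $k$ classes. Everything else reduces to the elementary distance arithmetic sketched above.
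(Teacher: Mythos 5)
Your proof is correct and takes essentially the same route as the paper's: since the instance is pruned, the $2k$-vertex clique or independent set meets at least $k$ distinct twin classes, non-twin pairs inside the set are distinguished by neighbours outside it, and picking one representative per class yields a co-resolving set of size at least $k$. You merely spell out the distance arithmetic and the verification that the distinguishing vertex lies outside the chosen set, details the paper's terser proof leaves implicit.
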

\begin{proof}
  Let~$X$ be a set of size~$2k$ such that~$G[X]$ is either a clique or
  an independent set. Since~$G$ is pruned there are at least~$|X|/2 \geq k$
  distinct twin-classes in~$X$, which must be distinguished by their
  neighbourhoods outside of~$X$. Hence selecting one vertex from each
  twin-class in~$X$ gives a co-resolving set of size at least~$k$, 
  and~$(G,k)$ is a yes-instance.
\end{proof}

\noindent
Let us define the function~$\tau(u,v) := |N(u) \symdiff N(v)|$. Note that
if~$\tau(u,v) \geq k+1$, then~$u$ and~$v$ are distinguished from each other
by any set of $n-k$~landmarks, simply by virtue of having a necessarily
different set of landmarks as neighbours. Let us therefore construct
an auxiliary graph~$H$ on~$V(G)$ where
\[
  uv \in E(H) \iff \tau(u,v) \leq k.
\]
Observe that if~$H$ contains an independent set~$X$ of size~$k$ then~$(G,k)$
is a yes-instance: the set~$V(G)\setminus X$ has size~$n-k$ and as such will
still resolve all of~$X$. This indicates that~$H$ must be rather dense, however,
we can also argue that it cannot have arbitrarily high degree:

\begin{lemma}\label{lemma:H-degree-bound}
  Let~$v \in V(H)$ have degree at least~$8k^3$ in~$H$. Then~$(G,k)$ is 
  a yes-instance.
\end{lemma}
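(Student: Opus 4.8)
The plan is to reduce the statement to a purely combinatorial selection problem and then exploit the largeness of $N_H(v)$. The starting point is the following reformulation of a co-resolving set: for any $T \subseteq V(G)$, the set $V(G)\setminus T$ is resolving if and only if every pair $x,y \in T$ is resolved by some vertex of $V(G)\setminus T$. Indeed, if at least one of $x,y$ lies outside $T$, that vertex resolves the pair trivially, since its distance to itself is $0$. Thus it suffices to exhibit a set $T$ with $|T| \geq k$ all of whose internal pairs admit a resolver in the complement, and it is natural to search for $T$ inside $A := N_H(v)$, which has size at least $8k^3$. I also record that a pair $x,y$ admitting \emph{no} resolver outside $\{x,y\}$ must satisfy $N(x)\symdiff N(y) \subseteq \{x,y\}$, i.e.\ $x,y$ are twins; since $G$ is pruned, such metric-twin pairs form a matching and can be avoided when building $T$.

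The engine of the argument is a dichotomy based on \emph{traces}. For $u \in A$ consider its trace $N(u)\setminus A$ outside $A$. If at least $k$ distinct traces occur, I pick one representative per trace; any two representatives differ on a vertex $w \notin A$, which therefore lies outside any $T \subseteq A$ and resolves the pair. By the reformulation above this representative set is a co-resolving set of size at least $k$, so $(G,k)$ is a yes-instance. Otherwise fewer than $k$ traces occur and, by pigeonhole, some trace-class $A_1 \subseteq A$ of size at least $|A|/k$ consists of vertices agreeing outside $A$; all their pairwise differences, and hence all their resolvers, are confined to $A$. I then repeat the dichotomy inside $A_1$ relative to $A_1$: either $k$ distinct traces $N(\cdot)\setminus A_1$ appear, yielding a representative set whose pairs are resolved inside $A\setminus A_1$ (still outside $T$) and hence a yes-instance, or a smaller class agreeing outside $A_1$ survives. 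Because each refinement shrinks the current class by at most a factor $k$ while starting from $8k^3 = (2k)^3$, only a constant number of rounds is possible before the class becomes too small, and the independent-set shortcut (an independent set of size $k$ in $H$ already certifies a yes-instance) can be invoked whenever the conflict pattern is sparse.

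The crux, and the reason the threshold is cubic rather than linear, is the terminal case in which the refinement bottoms out in a near-twin cluster: a set $D$ of vertices that pairwise agree on all of $V(G)\setminus D$, so that $D$ is a module and every pair inside $D$ can be resolved only by landmarks inside $D$. Here external resolvers vanish and the earlier shortcuts do not apply directly. I would handle this case by observing that within such a module all relevant distances collapse to the set $\{1,2\}$, so that resolving $D$ is equivalent to distinguishing its vertices by their adjacencies inside $D$; moreover $G[D]$ is again pruned, since a twin in $G[D]$ would, together with the module property, be a twin in $G$. This lets me re-run the same trace dichotomy one level deeper, now with resolvers forced to be internal, and the polynomial slack in $8k^3$ is exactly what guarantees that after the constantly many rounds either $k$ representatives with pairwise distinct traces are found or the cluster is too small to be consistent with the size bound. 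I expect the careful bookkeeping of this near-twin/module case---ensuring that the reserved internal resolvers never collide with the chosen set $T$---to be the main technical obstacle; everything else is pigeonhole together with the self-resolving observation.
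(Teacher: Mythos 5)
Your opening reformulation and your first trace dichotomy are sound: if at least $k$ distinct traces $N(u)\setminus A$ occur among $A = N_H(v)$, then one representative per trace is indeed a co-resolving set, because for any two representatives $u,u'$ some vertex of $(N(u)\symdiff N(u'))\setminus A$ lies outside $T$ and is adjacent to exactly one of them. The genuine gap is the terminal case, and it is not mere bookkeeping. First, the refinement does \emph{not} bottom out in a module: the class $A_{i+1}$ consists of vertices agreeing outside $A_i$, not outside $A_{i+1}$, so its members may still disagree on $A_i\setminus A_{i+1}$. Reaching a genuine module (a fixed point of the refinement) could take arbitrarily many rounds, while each round costs a factor of $k$ and you can afford only three of them starting from $8k^3$; after three rounds you hold a set of size about $8$ that agrees outside its predecessor and nothing more --- no yes-instance, no contradiction. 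Second, even if you did reach a module $D$, your plan to ``re-run the same trace dichotomy one level deeper, with resolvers forced to be internal'' is exactly the original problem again (find $T\subseteq D$ with $|T|\ge k$ whose pairs are distinguished by adjacency to vertices of $D\setminus T$), now on a smaller set and with no external help whatsoever; there is no progress measure and no base case. Finally, the appeal to the independent-set-in-$H$ shortcut ``whenever the conflict pattern is sparse'' is unsubstantiated: nothing in the refinement produces pairs with $\tau(u,u')\ge k+1$, which is what an independent set in $H$ requires.

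What is missing is a structural device that forces all pairwise differences out of the chosen set in one stroke, which is how the paper proceeds. It first notes that all of $S=N_H[v]$ has pairwise $\tau\le 2k$ by the triangle inequality, then runs a degree dichotomy in $G[S]$: if every vertex of $G[S]$ has degree below $4k^2$, greedily packing closed neighbourhoods yields an independent set of size $2k$ in $G$; otherwise some vertex has degree at least $4k^2$, and iteratively intersecting neighbourhoods (each step loses at most $2k+1$ vertices precisely because $\tau\le 2k$) yields a clique of size $2k$. In either case Lemma~\ref{lemma:clique-indset} finishes: inside a clique $X$ one has $X\subseteq N[u]\cap N[u']$, and inside an independent set $X$ one has $N(u)\cap X=\emptyset$, so the relevant symmetric difference of any two non-twin members of $X$ automatically avoids $X$, and one vertex per twin class is a co-resolving set. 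Note that this argument works even when $X$ sits inside a module --- exactly the configuration your recursion cannot handle. To rescue your approach you would need some analogue of this clique/independent-set extraction for the terminal case; the trace refinement alone cannot supply it.
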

\begin{proof}
  Let~$S := N_H[v]$. Note that for every pair~$u,u' \in S$ it holds that
  \[
    \tau(u, u') \leq \tau(u,v) + \tau(u',v) \leq 2k.
  \]
  Now turn our attention to $G$. 
  First consider the case in which every vertex in~$G[S]$ has degree less
  than~$4k^2$. Then greedily packing closed neighbourhoods gives an
  independent set in~$G$ of size at least~$2k$, and by
  Lemma~\ref{lemma:clique-indset}, $(G,k)$ is a yes-instance. 

  Thus consider the alternative that~$G[S]$ contains a vertex~$u_1$ of degree
  at least~$4k^2$. Define~$S_1 := N_{G[S]}(u_1)$ and
  pick any vertex~$u_2 \in S_1$. Note that since~$\tau(u_1,u_2) \leq 2k$ it follows that
  \[
    |N_{G[S]}(u_1) \symdiff N_{G[S]}(u_2)| \leq |N_G(u_1) \symdiff N_G(u_2) | \leq 2k.
  \]
  Consequently, $u_1$ and~$u_2$ share at least~$|S_1| - 2k -1 \geq
  4k^2 - 2k - 1$ neighbours in~$G[S]$ (removing one extra since
  $u_2 \in S_1$). We can repeat this procedure to construct a sequence
  of distinct vertices~$u_1,u_2,\ldots,u_r$ and subsets~$S_1 \supseteq
  S_2 \supseteq \ldots \supseteq S_r$ 
  where~$S_i := \bigcap_{j \leq i} N_{G[S]}(u_j)$ and~$u_i \in
  S_{i-1}$ is chosen arbitrarily. The sequence terminates 
  with~$S_r = \emptyset$, giving a clique in $G$ of size $r$.
  Since $|S_i| \geq |S_{i-1}|-2k-1$ for every $i \in [r]$, 
  we get $|S_{2k-1}| \geq |S_1| - (2k-2)(2k+1) > 0$ since
  $|S_1| \geq 4k^2$.  Thus $r \geq 2k$ and $u_1$, \ldots, $u_r$ 
  induces a clique of size at least~$2k$ in $G$,
  and again by Lemma~\ref{lemma:clique-indset} we conclude
  that~$(G,k)$ is a yes-instance. 
\end{proof}

\noindent
With these pieces in place, we can prove the first result of this section.

\savingkernel*
\begin{proof}
  By Lemma~\ref{lemma:H-degree-bound} we either have  that $(G,k)$ is a yes-instance 
  or that the auxiliary graph~$H$ has a maximum degree less than~$8k^3$. Assuming
  the latter, if~$|V(G)| = |V(H)| \geq 8k^4$ then~$H$ contains an independent
  set of size at least~$k$ and, as observed above, $(G,k)$ is a yes-instance.

  The kernel for \Problem{Saving Landmarks} is therefore the following procedure:
  for a given instance~$(G',k')$, compute the reduced instance~$(G,k)$. If~$G$
  contains more than~$8k^4$ vertices, return a trivial yes-instance. Otherwise,
  return~$(G,k)$. 
\end{proof}

%
\noindent
Let us now move on to the second result, the single-exponential $\FPT$ algorithm.
To better describe the algorithm, let us introduce a definition.  For
a set $X \subseteq V(G)$, we say that two vertices $u$ and $v$ are
\emph{$X$-equidistant} if $\dist(u,w)=\dist(v,w)$ for every $w \in X$,
i.e., if $X$ fails to resolve $u$ and $v$.  Note that this induces an
equivalence relation over $V(G)$.

The main ingredient will be fact that a solution to \Problem{Saving Landmarks}
is witnessed already by a small resolving set.

\begin{lemma}\label{lemma:resolve-witness}
  Let~$T$ be a co-resolving set of a graph $G$. Then there exists a set~$S \subseteq V(G)\setminus T$ 
  of size at most~$|T|$ that resolves~$T$.
\end{lemma}
\begin{proof}
  We construct~$S$ iteratively as follows. Begin with~$S = \emptyset$ and pick
  a pair~$u,v$ of $S$-equidistant vertices in~$T$. Since~$V(G)\setminus T$ resolves~$T$,
  there exists a vertex~$w \in V(G)\setminus T$ that distinguishes~$u$ and~$v$. Add~$w$
  to~$S$ and partition~$T$ into equivalence classes of $S$-equidistant
  vertices. Pick a
  new pair of $S$-equidistant vertices from one of the classes and repeat. Observe that the
  number of equivalence classes increase with every addition to~$S$, hence after at most~$|T|$
  steps the set~$S$ resolves every pair in~$T$.
\end{proof}

\noindent
We are now ready to complete the proof of Theorem~\ref{thm:saving-algo}.

\savingalgo*

\begin{proof} We may assume that $n\ge 2k$. Let us first show the following
claim: there exists a co-resolving set $T$ of $G$ of size at least $k$ if and
only if there is a partition $V(G) = R \cup B$  of $V(G)$  such that $R$
contains at least $k$ equivalence classes of  $B$-equidistant vertices.
Suppose that there exists a co-resolving set $T$ of $G$ of size at least $k$.
Then by Lemma \ref{lemma:resolve-witness}, there is a set $S\subseteq
V(G)\setminus T$  of size at most~$|T|$ that resolves~$T$. Let $T\subseteq R$
and $S\subseteq B$ for a partition  $V(G)=R \cup B$. Then $B$ resolves $T$ and
hence $R$ has at least $|T|\ge k$ equivalence classes of $B$-equidistant
vertices. Suppose now that there is a partition $R \cup B$ of $V(G)$  such
that $R$ has at least $k$ equivalence classes of $B$-equidistant vertices.
Choose a vertex from each equivalence class to form a set $T$. Then $T$ is a
co-resolving set of $G$.

The above claim leads to the following randomized algorithm. Choose a natural
number $N$ defined later on. Repeat $N$ times the following: uniformly at
random partition the vertices of~$G$ into $B$ and $R$, and derive equivalence
classes of $B$-equidistant vertices in $R$. If the number of classes is at
least $k$, then conclude  that~$(G,k)$ is a yes-instance and stop. If after
all repetitions we do not conclude that $(G,k)$ is a yes-instance,  then we
conclude that $(G,k)$ is a no-instance.

Let us argue about the success probability of the randomized algorithm and how
to choose $N$. The probability that for a random partition  the vertices
of~$G$ as $V(G)=R \cup B$,  $R$ has at least $k$ equivalence classes of
$B$-equidistant vertices is at least the probability that $T\subseteq R$ and
$S\subseteq B$,  where sets $T,S$ are as in Lemma \ref{lemma:resolve-witness},
which is~$2^{-|T|-|S|} \ge  4^{-k}.$  Thus, $N=4^k$ is enough to achieve a
constant success probability~\cite{CyganFKLMPPS15}.

Observe that every loop in the randomized algorithm can be executed in
polynomial time. Thus, the running time of the randomized algorithm is
$O^*(4^k)$.  The randomized algorithm can be derandomized using the standard
$(n,k)$-universal set technique~\cite{CyganFKLMPPS15}, which brings an
additional $o(k)$ to the exponent of the running time. 
\end{proof}

%
%
\section{Structural parameterizations for \Problem{Metric Dimension}}\label{sec:nopolykernel}

As Hartung and Nichterlein observed~\cite{HartungN13}, \Problem{Metric
Dimension[VC]} is trivially  $\FPT$~by virtue of
Observation~\ref{obs:pruning}: After reducing the size of each twin class to
at most two, any instance with a vertex cover~$X$ of size~$t$ will have
at most~$t + 2^{t+1}$ vertices. In sparse graph classes, the twin reduction
even results in a polynomial-size kernel: in classes of bounded expansion (\eg planar graphs
or graphs excluding a topological minor), the number of twin classes
in~$V(G)\setminus X$ is bounded linearly in~$t$ and in nowhere dense classes
by~$t^{1+o(1)}$~(\cf Lemma~4.3 and Corollary~4.4 in~\cite{SparseKernels}).
Furthermore, if the input graphs stem from a~$d$-degenerate class, the number
of twin-classes and thus the number of vertices in the kernel is bounded by~$O(t^{d+1})$; a fact
that follows easily from the observation that in such a class at most~$dt$
vertices in the independent set can have degree more than~$d$.

It is therefore natural to ask whether this structural parameterization allows
a polynomial kernel in general graphs, a question we answer in the negative.
We will use in the following that \Problem{Hitting Set} parameterized by the
size of the universe plus the solution size does not admit a polynomial kernel
unless the polynomial hierarchy collapses to the third level
~\cite{ColsAndIds}

\kernelhardness*
\begin{proof}
  We provide a polynomial parameter transformation from~\Problem{Hitting
  Set[$|U|+\ell$]}, i.e. parameterized by the size of the universe and the
  solution size, to~\Problem{Metric Dimension[$\text{VC}+k$]}. 
  Let~$(U,\mathcal F, \ell)$ be a \Problem{Hitting Set} instance with~$n =
  |U|$ and~$m = |\mathcal F|$. We construct a graph~$G$ as follows (\cf Figure~\ref{fig:reduction}):
  \begin{enumerate}
    \item Begin with the usual bipartite representation of~$U,\mathcal F$,
      i.e., create a bipartite graph $G=(U \cup \cF, E)$ where
      for vertices $u \in U$ and $R \in \cF$ we have $uR \in E$ if and
      only if $u \in R$;
    \item add $t_n:=2 \lceil \log_2 n \rceil$ vertices~$I_U$ to the
      graph and edges between~$U, I_U$ so that every vertex in~$U$ has
      a unique neighbourhood in~$I_U$ of size~$t_n/2$;
    \item add~$t_m:= 2 \lceil \log_2 m \rceil$ vertices~$I_{\mathcal F}$ 
      to the graph and edges between~${\mathcal F}$ and $I_{\mathcal F}$ 
      such that every vertex in~${\mathcal F}$ has a unique neighbourhood in~$I_{\mathcal F}$ of size~$t_m/2$;
    \item add three vertices~$a_U, a, a_{\mathcal F}$
      where~$N(a_U)=U$, $N(a)=U \cup \mathcal F$,
      and $N(a_{\mathcal F})=\mathcal F$;
    \item create true twin copies~$I'_U, I'_{\mathcal F}, a'_U, a', a'_{\mathcal F}$ of
          $I_U, I_{\mathcal F}, a_U, a, a_{\mathcal F}$, and finally
    \item create false twin copies~$\mathcal F'$ of~$\mathcal F$ but remove all edges from~$\mathcal F'$
          to~$U$ afterwards. For simplicity, we will label the copy of any vertex~$R \in \mathcal F$ by~$R' \in \mathcal F'$.
  \end{enumerate}

  \begin{figure}[t]
    \hspace*{2pt}\includegraphics[width=\textwidth]{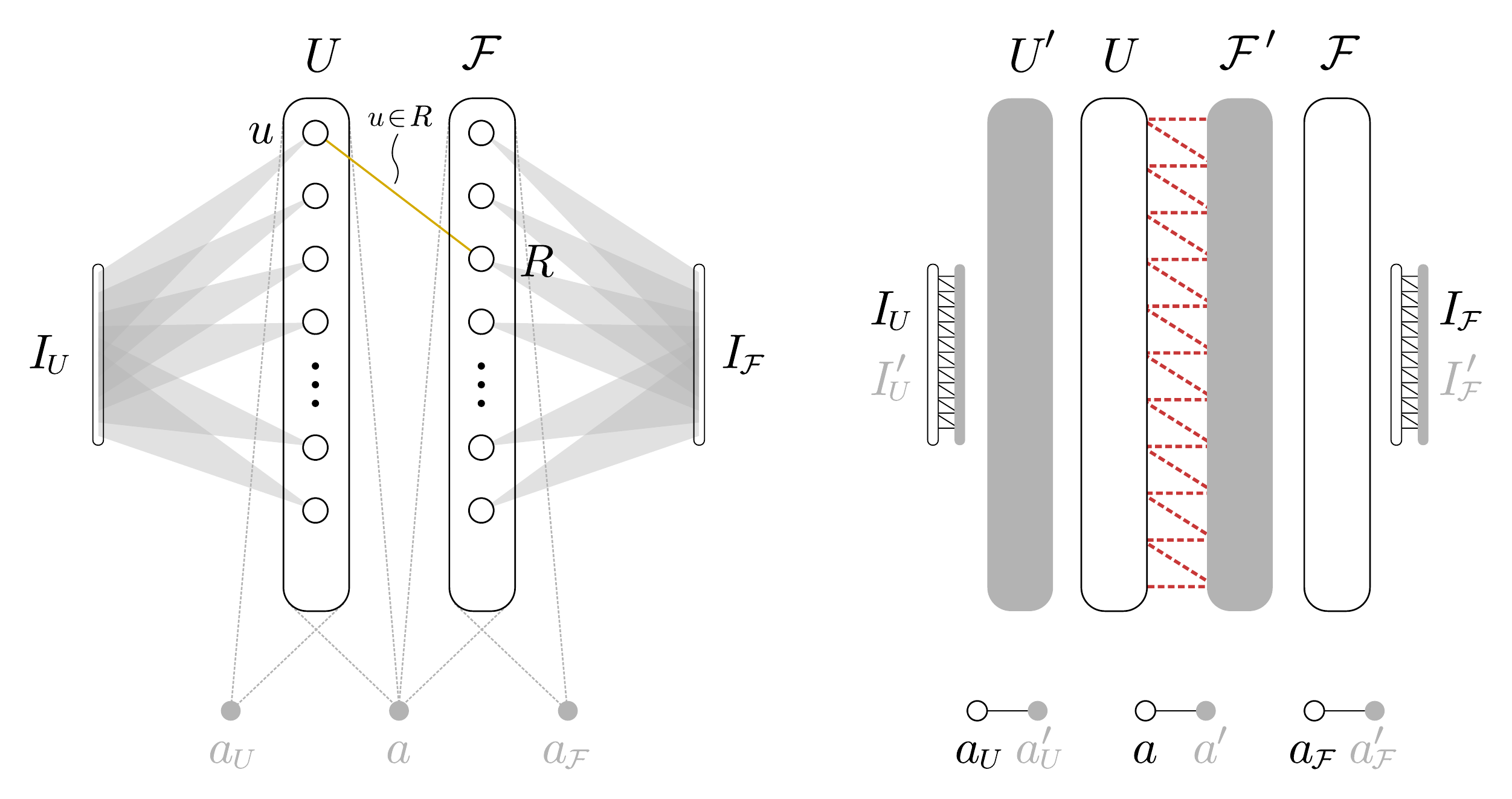}
    \caption{\label{fig:reduction}%
      A schematic of the reduction from a \Problem{Hitting Set[$|U|+\ell$]}
      instance~$(U,\mathcal F)$ to a \Problem{Metric Dimension[$\text{VC}+k$]}
      instance. The left drawing shows the basic construction, the right drawing
      the addition of false and true twins (an edge between a white set and its grey
      counterpart indicates that they are true twins, the absence of an edge that they
      are false twins). Note that the construction removes edges between the set
      $U$ and~$\mathcal F'$.
    }
  \end{figure}

  \noindent
  In summary, the sets~$I_U, I'_U$ connect to~$U$ only, the sets~$I_{\mathcal
  F}, I'_{\mathcal F}$ to~$\mathcal F$ and~$\mathcal F'$, the edges
  between~$U,\mathcal F$ encode the hitting set instance and the
  pairs~$\{a_U,a'_U\}$, $\{a,a'\}$, and~$\{a_\mathcal F,a'_\mathcal F\}$ are
  apices for the sets~$U$, $U \cup \mathcal F \cup \mathcal F'$ and~$\mathcal
  F \cup \mathcal F'$, respectively. Our construction concludes with~$(G,X,k)$
  as the \Problem{Metric Dimension[$\text{VC}+k$]} instance with the vertex cover~$X :=
  V(G) \setminus (\mathcal F \cup \mathcal F')$
  and solution size~$k := \ell + t_U + t_{\mathcal F} + 3$.

  Let us first show that if~$(U,\mathcal F,\ell)$ is a yes-instance
  then so is~$(G,X,k)$. Suppose that~$H \subseteq U$ is a hitting set
  for~$\mathcal F$ of size~$\ell$. We construct a landmark set~$S$ for~$G$ by
  setting $S = H \cup I_U \cup I_{\mathcal F} \cup \{a_U,a,a_{\mathcal F}\}$; let
  us now argue that is indeed a resolving set. First, note that the selected
  apices~$a_U$, $a$, and~$a_{\mathcal F}$ make sure that~$U$ is distinguished
  from~$V(G)\setminus U$ and~$\mathcal F \cup \mathcal F'$ from~$V(G)
  \setminus (\mathcal F \cup \mathcal F')$. Since~$I_U$ and~$I_{\mathcal F}$
  are in~$S$, these sets are of course distinguished from their twin
  counterparts~$I'_U, I'_{\mathcal F}$. By construction, every vertex in~$U$
  has a unique neighbourhood in~$I_U$, hence all of~$U$ is resolved by $S$.
  The same holds true for all pairs~$R, Q \in \mathcal F \cup \mathcal F'$ as
  long as~$Q \neq R'$ and~$R \neq Q'$. The only pairs we have not yet shown to
  be resolved by~$S$ are of the form $R, R'$ for~$R \in \mathcal F$ with its
  copy~$R' \in \mathcal F'$. Since~$H \subseteq S$ is a hitting set for
  $(U, \mathcal F)$, every set~$R \in \mathcal F$ is adjacent to at least one
  vertex in~$H$ while~$R'$ has no neighbours at all in~$U$. Thus all such pairs
  are resolved by~$S$ and we conclude that~$S$ is a resolving set.

  In the other direction, assume that~$S$ is a resolving set of size~$k$
  for~$G$. Since for each pair of twins at least one vertex has to be in any
  resolving set, we may assume, without loss of generality, that~$I_U \cup
  I_{\mathcal F} \cup \{a_U, a, a_\mathcal F\} \subseteq S$. Let us call this
  collection of~$k - \ell$ vertices~$S' \subseteq S$ and let us see what it
  resolves in~$G$. As argued above, every pair except those of the form~$R \in
  \mathcal F$, $R' \in \mathcal F'$ are certainly resolved. We first need to
  argue that~$S'$ indeed does not resolve those pairs: this is immediately
  obvious for landmarks in~$I_{\mathcal F} \cup \{a_U, a, a_\mathcal F\}$
  since~$R, R'$ share the same neighbours inside this set. For landmarks
  in~$I_U$, note that all vertices in $\mathcal F \cup \mathcal F'$ are at
  exactly distance two from every vertex in~$I_U$ via the apex vertex~$a$
  (or~$a'$). Hence~$S'$ cannot resolve any pair~$R, R' \in \mathcal F \cup
  \mathcal F'$ and these pairs must then be resolved by the remaining~$\ell$
  vertices in~$S\setminus S'$. All vertices outside of~$U \cup \mathcal F \cup
  \mathcal F'$ are either selected or twins to selected vertices, hence we may
  assume that $S\setminus S' \subseteq U \cup \mathcal F \cup \mathcal F'$.

  First, consider a potential landmark~$ R \in \mathcal F$. Since~$R$ has
  distance exactly two to every vertex in~$\mathcal F \cup \mathcal F'$ except
  itself, such a selection would only distinguish~$R$ from all other vertices
  and not resolve any other pair. Thus we can as well choose any vertex
  in~$N(R) \cap U$ instead and potentially resolve more pairs, thus we may
  assume that~$S \cap \mathcal F = \emptyset$. 

  Let us split~$S \setminus S'$ into~$S_U := U \cap (S \setminus S')$
  and~$S_{\mathcal F'} := \mathcal F' \cap (S \setminus S')$. Again,
  $S_{\mathcal F'}$ only distinguishes~$S_{\mathcal F'}$ from the rest
  of~$\mathcal F \cup \mathcal F'$. Thus~$S_U$ necessarily distinguishes all
  pairs~$R, R'$ with~$R' \not \in S_{\mathcal F'}$ and therefore~$S_U$ hits
  all sets~$R \in \mathcal F$ for which~$R' \not \in S_{\mathcal F'}$. We
  finally construct a hitting set~$H$ of size~$\ell$ as follows: we take all
  vertices in~$S_U$ and for each pair~$R,R' \in \mathcal F \cup \mathcal F'$
  with~$R' \in S_{\mathcal F'}$ we select one (arbitrary) neighbour~$N(R) \cap U$.
  By the previous observation, $H$ is a hitting set for~$\mathcal F$ of size~$\ell$
  and we conclude that~$(U,\mathcal F,\ell)$ is a yes-instance.

  This concludes the parameter preserving transformation. Let us conclude by
  checking that the parameter $|X| + k$ is polynomial in $n$ and $\ell$: 
  \begin{align*}
    |X| + k &= (2t_U  + 2t_{\mathcal F} + n + 6) + (\ell + t_U + t_{\mathcal F} + 3) \\
            &= 3t_U + 3t_{\mathcal F} + n + \ell + 9 \\
            &= O( \log m + \log n + n + \ell ) = O(n + \ell),
  \end{align*}
  where we used that~$m \leq 2^n$.
\end{proof}

\noindent
We note that this reduction also gives evidence against a more general
form of kernelization. Where a standard kernel can be understood as a
many-one reduction from a problem to itself, with output size bounded
by a function of the parameter, a \emph{Turing kernel} is the
corresponding Turing reduction notion.  In other words, informally, a
Turing kernel is a polynomial-time procedure that solves a
parameterized problem, with access to an oracle for the problem but
with a bound $f(k)$ on the maximum length of the questions it may ask
of the oracle. A \emph{polynomial Turing kernel} is a Turing kernel 
with a bound $f(k)=k^{O(1)}$ on the question size. For a more formal
definition, see~\cite{HermelinKSWW15,CyganFKLMPPS15}. 
It is known that there are parameterized problems that do not allow a
polynomial kernel unless the polynomial hierarchy collapses, but which
do allow polynomial Turing kernels; cf.~\cite{HermelinKSWW15,ThomasseTV17,JansenM15}.

Although we do not have a framework for excluding polynomial Turing
kernels that is as powerful as that for excluding standard polynomial
kernels, Hermelin \etal~\cite{HermelinKSWW15} defined a hierarchy of
complexity classes, conjectured to represent problems that do not
allow polynomial Turing kernels.  The most basic and most common of
these hardness classes is WK[1], which is in turn contained in a
larger class MK[2]. It is conjectured in~\cite{HermelinKSWW15} that no
WK[1]-hard problem has a polynomial Turing kernel. 
Since \Problem{Hitting Set[$n$]} is known to be
MK[2]-hard~\cite{HermelinKSWW15}, the above reduction gives the
following. 

\begin{corollary}
  \Problem{Metric Dimension[$\text{VC}+k$]} is MK[2]-hard (hence also
  WK[1]-hard) under polynomial parameter transformations, and does not
  allow a polynomial Turing kernel unless \Problem{CNF-SAT$[n]$} and
  every other problem in MK[2] does.
\end{corollary}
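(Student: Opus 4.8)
The plan is to derive the corollary by composing the reduction from the proof of Theorem~\ref{nopolykernel} with the known MK[2]-hardness of \Problem{Hitting Set[$n$]}, and then invoking the completeness theory of Hermelin \etal~\cite{HermelinKSWW15}. The first point to record is that the transformation built in Theorem~\ref{nopolykernel} is already a polynomial parameter transformation (PPT) from \Problem{Hitting Set[$|U|+\ell$]} to \Problem{Metric Dimension[$\text{VC}+k$]}: it runs in polynomial time, it preserves the answer in both directions, and the closing calculation of that proof shows that the produced parameter satisfies $|X|+k = O(n+\ell)$, i.e.\ it is bounded linearly in the source parameter $|U|+\ell$.

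Next I would bridge the two parameterizations of \Problem{Hitting Set}. The identity map is a PPT from \Problem{Hitting Set[$n$]}, parameterized by $|U|$ alone, to \Problem{Hitting Set[$|U|+\ell$]}, provided we first normalise so that $\ell \le |U|$: whenever $\ell \ge |U|$ we may decide the instance in polynomial time by testing whether $U$ itself is a hitting set and then output a trivial equivalent instance. After this normalisation $|U|+\ell \le 2n$, so the target parameter is polynomial in the source parameter. Composing this with the transformation of Theorem~\ref{nopolykernel} gives a PPT from \Problem{Hitting Set[$n$]} to \Problem{Metric Dimension[$\text{VC}+k$]}. Since \Problem{Hitting Set[$n$]} is MK[2]-hard~\cite{HermelinKSWW15} and MK[2]-hardness is preserved under composition with PPTs, \Problem{Metric Dimension[$\text{VC}+k$]} is MK[2]-hard; and because WK[1] $\subseteq$ MK[2], every WK[1] problem is in particular an MK[2] problem and hence PPT-reduces to it, so it is WK[1]-hard as well.

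The Turing-kernel statement I would then obtain by invoking the framework of~\cite{HermelinKSWW15} directly rather than reproving it. In that theory the conjectured obstruction to polynomial Turing kernels is organised around the complete problems of the hardness classes, with \Problem{CNF-SAT$[n]$} as a canonical member of MK[2]; its transfer results show that a polynomial Turing kernel for any MK[2]-hard problem would propagate, along the PPT reductions witnessing hardness, to polynomial Turing kernels for every problem in MK[2], and in particular for \Problem{CNF-SAT$[n]$}. This is precisely the conditional form asserted by the corollary.

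The points needing care here are not computational. The first is the bridge between the two \Problem{Hitting Set} parameterizations: one must check that the solution size $\ell$ can indeed be bounded by the universe size, since this is the only place where the PPT property could silently fail. The second, and genuinely more delicate, is that the Turing-kernel conclusion is necessarily conditional: unlike the standard-kernel lower bound of Theorem~\ref{nopolykernel}, which is unconditional modulo the non-collapse of the polynomial hierarchy, excluding polynomial Turing kernels is only possible relative to the conjectural completeness theory of Hermelin \etal, so the statement must be phrased as a transfer of hardness to the MK[2]-complete problems rather than as an outright impossibility.
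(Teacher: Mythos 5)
Your proposal is correct and follows essentially the same route as the paper, which simply composes the reduction of Theorem~\ref{nopolykernel} with the known MK[2]-hardness of \Problem{Hitting Set[$n$]} and invokes the framework of Hermelin \etal~\cite{HermelinKSWW15} for the Turing-kernel transfer. The only difference is that you make explicit the bridging step (normalising $\ell \le |U|$ so that the identity map is a PPT from \Problem{Hitting Set[$n$]} to \Problem{Hitting Set[$|U|+\ell$]}), a detail the paper leaves implicit.
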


\section{Conclusion} 

\noindent
We initiated the study of the parameterized complexity of
the dual of the classic \Problem{Metric Dimension} problem and obtained a
polynomial kernel as well as a single-exponential $\FPT$ algorithm. To the best
of our knowledge, this is the first non-trivial parameterization for
\Problem{Metric Dimension} which leads to a polynomial kernel. Since our focus
in this article was on obtaining new classification results, we leave the
improvement of the kernel size or a potential proof of a lower bound on the
bitsize of our kernel, to future work.

In addition, we note that it remains open whether \Problem{Metric Dimension}
is polynomial time solvable even on series-parallel graphs. Since series-
parallel graphs are precisely the graphs of treewidth 2, a negative answer
would also imply that there is no XP algorithm for \Problem{Metric Dimension}
parameterized by the treewidth. Consequently, a natural starting point of
enquiry towards addressing this question could be  the study of the
parameterized complexity of \Problem{Metric Dimension} parameterized by
treewidth. 

\paragraph{Acknowledgement} Gutin was partially supported by Royal Society Wolfson Research Merit Award.

\lv{
\section{TODO: $\W[1]$-hardness of \Problem{Metric Dimension[FVS+$k$]}?}

We reduce from some version of \Problem{Multi-Coloured Independent Set} (we
will probably need some additional constraints on the input instance to make
this work). Let~$V_1,\ldots,V_k$ be the colour classes of the input instance
and we will assume that~$|V_a| = n+1$ for~$a \in [k]$. We will denote the
vertices in~$V_a$ by~$v^a_0,\ldots,v^a_n$. We construct a graph~$G$ as
follows: Create paths~$P_1,\ldots,P_k$ on vertices~$V_1,\ldots,V_k$ such that
the paths follow the imposed order on the colour classes. For every
edge~$e = v^a_i v^b_j$ between~$V_a, V_b$ we create an edge~$u^{a,e}_i u^{b,e}_j$
and connect it to the paths~$P_a, P_b$ as follows:
\begin{enumerate}
  \item Add a path from~$u^{a,e}_i$ to~$v^a_0$ of length~$N - i$,
  \item add a path from~$u^{b,e}_j$ to~$v^a_n$ of length~$N - (n - i)$,
  \item add a path from~$u^{b,e}_j$ to~$v^b_0$ of length~$N - j$, and
  \item add a path from~$u^{a,e}_j$ to~$v^b_n$ of length~$N - (n - j)$,
\end{enumerate}
where~$N$ is some big value which we will fix later.
The point of this construction is that~$v^a_i$ is equidistant
to $u^{a,e}_i$ and~$u^{b,e}_j$, hence selecting~$v^a_i$ as a landmark
will leave this pair unresolved. The same holds on the `other side':
$v^b_j$ will leave this pair unresolved as well.

\section{Notes: useful things from Eppstein's paper}
\def\ind{\operatorname{ind}}

A \emph{branch} is an induced path whose internal vertices
have all degree two in the graph. Note that every graph can be
trivially decomposed into~$m$ branches of length one.

\begin{lemma}
  For every branch~$B$ and every vertex~$s$ one can partition
  $B$ into at most three pieces on which the distance from~$s$
  are monotonic.
\end{lemma}

\begin{definition}
  Let $G$ be a graph, with $A$ and $B$ being two of its branches, and let
  $s$ be a landmark in a resolving set for $G$. Then the \emph{indistinct set} for
  $s$, $A$, and $B$ is defined to be 
  \[
    \ind(s,A,B) := \{ (a,b) \mid a \in A, b \in B ~\text{with}~ \dist(s,a) = \dist(s,b) \}
  \]
  \ie the set of pairs of vertices $(a, b) \in A \times B$ that are equidistant
  to~$s$.
\end{definition}
  
\noindent
Note that~$A = B$ is explicitly allowed in this definition. Since branches
behave `nicely' with respect to distances, the size of indistinct sets
between two branches is at most linear in the size of the branches:
\begin{lemma}
  Let~$A,B$ be branches of~$G$ and~$s$ a landmark in a resolving set for~$G$.
  Then the indistinct set for~$s,A,B$ has size at most~$O(\min\{|A|, |B|\})$.
\end{lemma}

\noindent
Finding a resolving set~$S$ can be rephrased as follows: a set~$S$ is resolving
if for every pair of branches~$A, B$ it holds that
$\bigcap_{s \in S} \ind(s,A,B)$ is empty.

\section{Metric Dimension for Median Graphs}

Median graphs are an interesting graph class with strong properties
(not well known within computer science, but well studied in
mathematics), which generalises graph classes such as trees, grids and
hypercubes. My main source for these notes is the Wikipedia page
\url{https://en.wikipedia.org/wiki/Median_graph}, but more details and
a more formal reference is found in the following survey:
\begin{itemize}
\item Bandelt, Hans-J{\"u}rgen; Chepoi, Victor (2008), ``Metric graph
  theory and geometry: a survey'', Surveys on Discrete and
  Computational Geometry, Contemporary Mathematics, 453, Providence,
  RI: American Mathematical Society, pp. 49--86,
  doi:10.1090/conm/453/08795, MR 2405677.
  Available at \url{http://pageperso.lif.univ-mrs.fr/~victor.chepoi/survey_cm_bis.pdf}.
\end{itemize}

\emph{Needless to say}, I do not completely guarantee having copied
these properties correctly! I have attempted to select those
properties that seem relevant to the \textsc{Metric Dimension}
problem. Although in all honesty I skipped over many of the more
obscurely named concepts. 

\begin{definition}
  Let $G=(V,E)$ be a graph. A \emph{median} for three vertices 
  $u, v, w \in V$ is a vertex $x \in V$ that lies on some shortest
  path for all pairs from $\{u,v,w\}$ (i.e., some shortest $uv$-path,
  some shortest $uw$-path and some shortest $vw$-path). A \emph{median
    graph} is a graph where for any triple $u, v, w$ there always
  exists a unique median vertex. Let $m(u,v,w)$ denote the function
  that returns the median vertex (where $m$ is a \emph{majority
    operation} if not all of $u, v, w$ are distinct, i.e.,
  $m(u,u,v)=m(u,v,u)=m(v,u,u)=u$). 
\end{definition}

Median graphs have some useful properties.

There is a ``representation'' of them via 2-CNF formulas.

\begin{proposition}
   For every median graph $G=(V,E)$ there is a 2-CNF formula $\cF$ 
   on a set of variables $X$ of $r$ variables such that the following
   hold:
   \begin{enumerate}
   \item There is a bijection $\psi: V \to \{0,1\}^X$ 
     between satisfying assignments of $\cF$
     and vertices of $V$. 
   \item There is an edge $uv \in E$ if and only if the Hamming
     distance between $u$ and $v$ is 1. 
   \item For any pair of vertices $u, v \in V$, 
     the distance between $u$ and $v$ in $G$ is the Hamming distance. 
   \end{enumerate}
   Furthermore, for any $u, v, w \in V$ the label 
   of the vertex $m(u,v,w)$ can be computed via coordinate-wise
   majority across the labels of $u$, $v$, $w$. 
\end{proposition}

We may also note that by item 2, there is a labelling $\psi': E \to X$
that labels edges of $G$ by variables in $\cF$ (this is injective on
trees, but not otherwise).

They are also ``retracts'' of hypercubes -- but I'm not sure I fully
understand this notion (whether it is the same notion as in graph
homomorphism, and whether it is useful here).

\subsection{Decompositions}

Median graphs decompose nicely. 

\begin{definition}
  A \emph{convex set} in a graph $G=(V,E)$ is a set of 
  vertices $S \subseteq V$ such that for any $u, v \in S$, 
  all shortest paths between $u$ and $v$ are contained in $S$. 
\end{definition}

\begin{proposition}
  Let $G=(V,E)$ be a median graph an $uv \in E$ an edge in $G$. 
  The following hold.
  \begin{enumerate}
  \item For every vertex $w \in V$, $w$ is either closest to $u$ or
    closest to $v$; i.e., no ties are possible. 
  \item Let $V=V_u \cup V_v$ be the corresponding partition of
    vertices as closest to $u$ respectively to $v$. Then each of $V_u$
    and $V_v$ is a convex set, of a special type referred to as
    \emph{halfspace}. 
  \end{enumerate}
  In particular, each of $G[V_u]$ and $G[V_v]$ is a median graph. 
\end{proposition}

In particular median graphs are bipartite. 

We also have the following possibly stronger result. But I did not
parse the definition well enough yet.

\begin{definition}
  A subset $U$ of $V$ is \emph{gated} if for every $x \in (V
  \setminus U)$ there exists $x' \in U$ such that for every $v \in U$ 
  there is a shortest path from $v$ to $x$ via $x'$. Here, $x'$ is
  called the \emph{gate of $x$ in $U$}. A graph $G$ is a \emph{gated
    amalgam} of $G_1$ and $G_2$ if $G_1$ and $G_2$ are intersecting
  gated subgraphs of $G$ and their union is $G$. 
\end{definition}

\begin{proposition}
  Every convex set of a median graph is gated.
\end{proposition}

Gated sets have the Helly property: every family of gated sets
that pairwise intersect have a non-empty common intersection. 
It is also claimed that the intersection of gated subgraphs is a gated
subgraph. 

The following is given explicitly in the survey, but the result
thereafter seems more useful.  (I am aware that I did not define the
Cartesian product of graphs.)

\begin{theorem}
  Every median graph on more than two vertices is either a Cartesian
  product or a gated amalgam of two proper median subgraphs. 
\end{theorem}

\begin{corollary}
  Every median graph can be created via a sequence of gated
  amalgamations of hypercubes.
\end{corollary}

\subsection{Results for metric dimension?}

I did not work too hard on figuring this out. But there are two more
or less obvious questions or directions to go here, in pursuit of a
polynomial-time result. 

\emph{1. Via decompositions.}
It is clearly trivial to compute the metric dimension of a
$k$-dimensional hypercube (it is $k$).  If we can find a procedure
that computes a resolving set for $G$ given a decomposition of $G$ as
an gated amalgamation of two proper median subgraphs $G_1$, $G_2$, 
then we can compute a resolving set for $G$ (assuming such a
decomposition is efficiently computable, which I do not know; 
but if all halfspaces are convex, therefore gated, subgraphs, 
then we may even attempt to compute a resolving set in this fashion). 

The paper also mentioned, but did not expand on, a special
decomposition called \emph{convex expansion}. There may be more.

In particular, the \emph{gate} property seems promising for
constructing or reasoning about resolving sets. 

\emph{2. Via the 2-CNF description.} The distances are perfectly
captured by the vertex labels (i.e., Hamming distance), so there
should exist a simple characterization of a resolving set in terms of
the corresponding models of the 2-CNF formula. Is this algorithmically
useful?

}

\end{document}